\newtheorem{Theorem}{Theorem}
\newtheorem{Remark}{Remark}
\newtheorem{Example}{Example}
\begin{document}
%
\title{Distributed and Private Coded Matrix Computation with Flexible Communication Load}
%
%
%

\author{\IEEEauthorblockN{\textbf{Malihe Aliasgari}\IEEEauthorrefmark{1},
\textbf{Osvaldo Simeone}\IEEEauthorrefmark{2}
and \textbf{J\"org Kliewer}\IEEEauthorrefmark{1}} 
\IEEEauthorrefmark{1}Department of Electrical and Computer
Engineering, New Jersey Institute of Technology,  Newark, NJ, U.S.A.\\
\IEEEauthorrefmark{2}King's College London, Department of Informatics,  London, U.K.
\thanks{This work was supported in part by 
the European Research Council (ERC) under
the European Union Horizon 2020 research and innovative programme (grant
agreement No 725731)
and by U.S. NSF grants CNS-1526547, CCF-1525629. O. Simeone is currently on leave from NJIT.

}}



\maketitle

\begin{abstract}
Tensor operations, such as matrix multiplication, are central to large-scale
machine learning applications. For user-driven tasks these operations can be
carried out on a distributed computing platform with a master server at the
user side and multiple workers in the cloud operating in parallel. For
distributed platforms, it has been recently shown that coding over the input
data matrices can reduce the computational delay, yielding a trade-off
between recovery threshold  and communication load. In this paper we impose
an additional security constraint on the data matrices and assume that
workers can collude to eavesdrop on the content of these data
matrices. Specifically, we introduce  a novel class of secure codes,
referred to as secure generalized PolyDot codes, that
generalizes previously published non-secure versions of these codes for
matrix multiplication. These codes extend the state-of-the-art by allowing a
flexible trade-off between recovery threshold and communication load for a
fixed maximum number of colluding workers.
\end{abstract}
 
\begin{IEEEkeywords}
Coded distributed computation, distributed learning, secret sharing,
information theoretic security.
\end{IEEEkeywords}

 \IEEEpeerreviewmaketitle

\section{Introduction}

At the core of many signal processing and machine learning applications are
tensor operations such as matrix multiplications
\cite{janzamin2015beating}. In the presence of practically sized data sets,
such operations are typically carried out using distributed computing
platforms with a master server and multiple workers that can operate in
parallel over distinct parts of the data set. The master server plays the
role of the parameter server, distributing data to the workers and periodically reconciling their internal state \cite{li2014scaling}. Workers are commercial off-the-shelf servers that are characterized by possible temporary failures and delays \cite{dean2013tail}. While current distributed computing platforms conventionally handle straggling servers by means of replication of computing tasks \cite{huang1984algorithm}, recent work has shown that encoding the input data can help reduce the computation latency, which depends on the number of tolerated stragglers by orders of magnitude, e.g., \cite{joshi2017efficient,wang2015using}.
More generally, coding is able to  control the trade-off between
computational delay and communication load between workers and master server
\cite{lee2018speeding,yu2017polynomial,li2018fundamental,aliasgari2018coded,aliasgari2018codedisit}.
Furthermore, stochastic coding can help keeping both input and output data
secure from the workers, assuming that the latter are honest by carrying out the prescribed protocol, but curious \cite{nodehi2018limited,yu2018lagrange,chang2018capacity,kakar2018rate,yang2019secure,rafael2018codes}. This paper contributes to this line of work by investigating the trade-off between computational delay and communication load as a function of the privacy level (see Fig.~\ref{figSecRT_CO} for a preview).

As illustrated in Fig.~\ref{FigComputSyst}, we focus on the basic problem of computing the matrix multiplication $\mathbf{C}=\mathbf{AB}$ in a distributed computing system of $P$ workers that can process each only a fraction $1/m$ and $1/n$ of matrices $\mathbf{A}$ and $\mathbf{B}$, respectively. Three performance criteria are of interest: (\emph{i}) the recovery threshold $P_R$, that is, the number of workers that need to complete their task before the master server can recover the product $\mathbf{C}$; (\emph{ii}) the communication load $C_L$ between workers and master server; and (\emph{iii}) the maximum number $P_C$ of colluding servers that ensures perfect secrecy for both data matrices $\mathbf{A}$ and $\mathbf{B}$. In order to put our contribution in perspective, we briefly review next prior related work.

Consider first solutions that provide no security guarantees, i.e.,
$P_C=0$. As a direct extension of \cite{lee2018speeding}, a first approach
is to use product codes that apply separate MDS codes to encode the two
matrices \cite{lee2017high}. The recovery threshold of this scheme is
improved by \cite{yu2017polynomial} which introduces so called polynomial
codes.
The construction in \cite{yu2017polynomial} is proved to be optimal
under the assumption that minimal communication is allowed between workers
and master server. In \cite{fahim2017optimal} so called MatDot codes are
introduced,
resulting in a lower recovery threshold at the expense
of a larger communication load. The construction in \cite{dutta2018optimal} bridges
the gap between polynomial and MatDot codes and presents so called PolyDot
codes,
yielding a trade-off between recovery threshold and
communication load. An extension of this scheme, termed Generalized PolyDot
(GPD) codes improves on the recovery threshold of PolyDot codes
\cite{dutta2018unified}, which is independently obtained by the
construction~in~\cite{yu2018straggler}.

Much less work has been done in the
literature if security constraints are factored in, i.e., if $P_C\ne 0$. In
\cite{yu2018lagrange} Lagrange coding is presented which achieves the
minimum recovery threshold for multilinear functions by generalizing MatDot
codes. In \cite{chang2018capacity,kakar2018rate,rafael2018codes} a reduction
of the communication load is addressed by extending polynomial codes. While
these works focus on either minimizing recovery threshold or communication
load, the \emph{trade-off} between these two fundamental quantities has not
been addressed in the open literature to the best of our knowledge. In this
paper, we intend to fill this void and present a novel class of secure
computation codes, referred to as secure GPD (SGPD) codes, that generalize
GPD codes at all communication load levels, yielding a new achievable trade-off between
recovery threshold and communication load as a function of the desired
privacy level.	
\section{System Model}\label{secModel}
\textbf{Notation:}  Throughout the paper, we denote a matrix with upper boldface letters (e.g., $\mathbf{X}$) and lower boldface letters indicate a vector or a sequence of matrices (e.g., $\mathbf{x}$).
 Furthermore, math calligraphic font refers to a set (e.g., $\mathcal{X})$.
 A set $\mathbb{F}$ represents the Galois field with cardinality $|\mathbb{F}|$. For any real number $a$, $\lceil a\rceil$ represents the largest integer nearest to $a$.

 \begin{figure*}[t!]
	\begin{center}
	\includegraphics[scale=0.25]{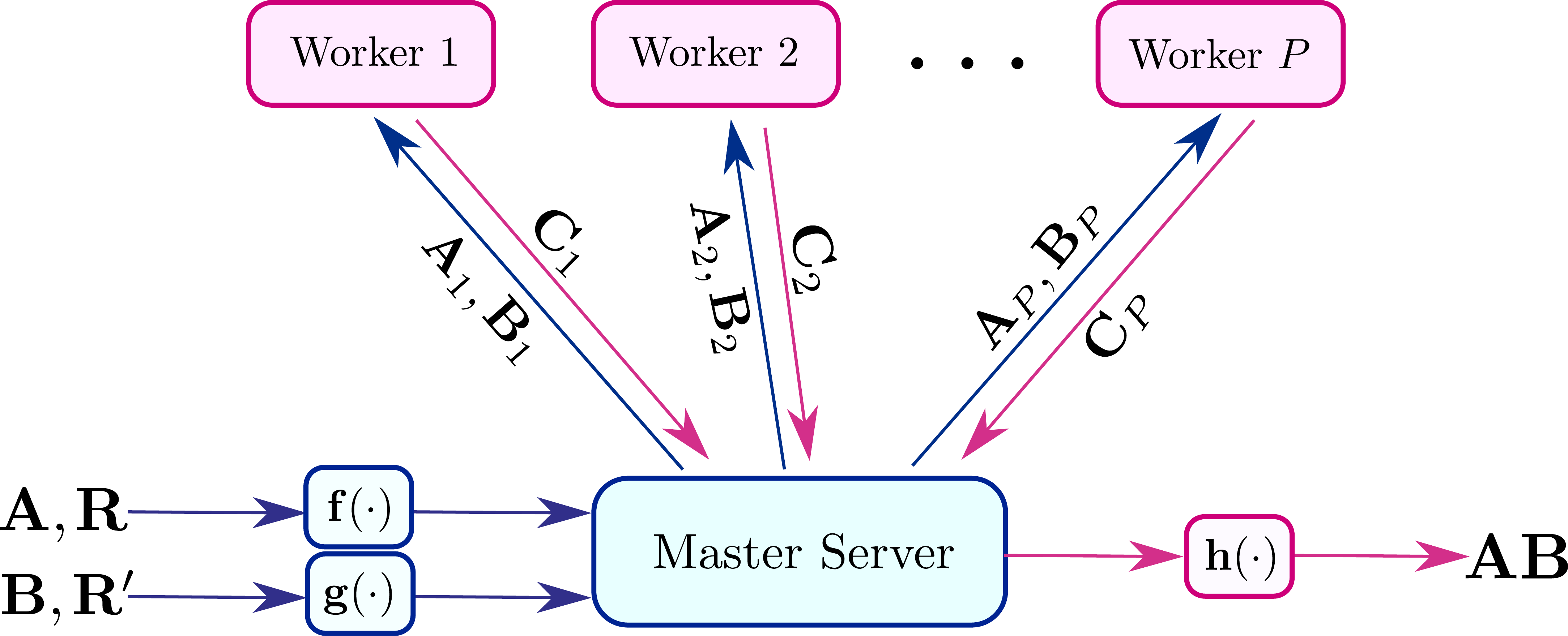}\vspace{-.1cm}~\caption{\footnotesize{ The master server encodes the input matrices $\mathbf A$ and $\mathbf B$ and random matrices $\mathbf{R}$ and $\mathbf{R}'$, respectively, to define the computational tasks of the slave servers or workers. The workers may fail or straggle, and they are honest but curious, with colluding subsets of workers of size at most $P_C$. The master server must be able to decode the product $\mathbf{AB}$ from the output of a subset of $P_R$ servers, which defines the recovery threshold.}}~\label{FigComputSyst}
	\end{center}
	\vspace{-5ex}
\end{figure*}

As illustrated in Fig.~\ref{FigComputSyst}, we consider a distributed computing system with a master server and $P$ slave servers or workers. The master server is interested in computing securely the matrix product $\mathbf{C}=\mathbf{AB}$ of two data matrices $\mathbf{A}$ and $\mathbf{B}$ with dimensions $T\times S$ and $S\times D$, respectively. The matrices have entries from a sufficient large finite field $\mathbb{F}$, with $|\mathbb{F}|>P$. Both matrices $\mathbf{A}$ and $\mathbf{B}$ contain confidential data. The $P$ workers receive information on matrices $\mathbf{A}$ and $\mathbf{B}$ from the master; they process this information; and they respond to the master, which finally recovers the product $\mathbf{AB}$ with minimal computational effort. Each worker can receive and process only $TS/m$ and $SD/n$ symbols, respectively, for some integers $m$ and $n$. 
The workers are honest but curious. Accordingly, we impose the secrecy constraint that, even if up to $P_C < P$ workers collude, the workers cannot obtain any information about both matrices $\mathbf A$ and $\mathbf B$  based on the data received from the master server. 

To keep the data secure and to leverage possible computational redundancy at the workers (namely, if $P/m>1$ and/or $P/n>1$), the master server sends encoded versions of the input matrices to the workers. 
Due to the above mentioned communication and storage constraints, the encoded matrices $ {\mathbf A}_p=\mathbf f_{p}(\mathbf A,\mathbf{R})$, with $\mathbf f_{p}:\mathbb{F}^{TS/m}\times \mathbb{F}^{TS/m}\rightarrow\mathbb{F}^{TS/m}$,  and $ {\mathbf B}_p=\mathbf g_{p}(\mathbf B,\mathbf{R}')$, with $\mathbf g_{p}:\mathbb{F}^{SD/n}\times \mathbb{F}^{SD/n}\rightarrow\mathbb{F}^{SD/n}$, 
to be sent to each $p$th worker, $p=1,\ldots,P$, have $TS/m$ and $SD/n$ entries, respectively, for some encoding functions $\mathbf{f}_p(\cdot)$ and $\mathbf{g}_p(\cdot)$. The random matrices $\mathbf{R}$ and $\mathbf{R}'$ consisting an arbitrary number of uniform i.i.d. randomly distributed entries from a field $\mathbb{F}$.
The security constraint imposes the condition
\begin{equation}\label{eq_security_constraint}
I( {\mathbf A}_{\mathcal{P}}, {\mathbf B}_{\mathcal{P}};\mathbf A,\mathbf B)=0, 
\end{equation}
for all subsets of $\mathcal{P}\subset [1,P]$ of $P_C$ workers, where the random matrices $\mathbf{R}$ and $\mathbf{R}'$ serve as a form of random keys in order to meet the security constraint \eqref{eq_security_constraint} \cite{shamir1979share}.

Each worker $p$ computes the product $\mathbf C_p= {\mathbf A}_p {\mathbf B}_p$ of the encoded sub-matrices $ {\mathbf A}_p$ and $ {\mathbf B}_p$. 
The master server collects a subset of $P_R\leq P$ outputs from the workers as defined by the subset $\{\mathbf{C}_p\}_{p \in \mathcal{P}_R}$ with $|\mathcal{P}_R|=P_R$. It then applies a decoding function $\mathbf{h}\left(\{\mathbf C_p\}_{p\in \mathcal{P}_R}\right)$, $\mathbf{h}:\underbrace{\mathbb{F}^{TD/td}\times\cdots \times \mathbb{F}^{TD/td}}_{P_R \text{~times}}\rightarrow \mathbb{F}^{TD}$. 
Correct decoding translates into the condition 
\begin{equation}\label{eq_decidability_constraint}
H(\mathbf{AB}|\{\mathbf C_p\}_{p\in \mathcal{P}_R})=0. 
\end{equation}
For given  parameters $m$ and $n$, the performance of a coding and decoding scheme is measured by the triple $(P_C,P_R,C_L)$, where $P_C$ is the maximum number of colluding workers; $P_R$ is the recovery threshold, i.e., the minimum number of workers whose outputs are used by the master to recover the product $\mathbf{AB}$; and $C_L$ is the communication load defined as $C_L=\sum_{p\in \mathcal{P}_R}|\mathbf{C}_p|$. Here, $|\mathbf{C}_p|$ is the dimension of the product matrix $\mathbf{C}_p$ computed by worker $p$.
Note that condition \eqref{eq_decidability_constraint} requires  the inequality 
$\min\{P_R/m,P_R/n\}\geq 1$ or
$P_R\geq P_{R,\min} \overset{\Delta}{=} \max \{m,n\}$, which is hence a lower bound for the minimum recovery threshold. Furthermore, the communication load is lower bounded by $C_L\geq C_{L,\min}\overset{\Delta}{=}TD$, which is the size of the product $\mathbf{C}=\mathbf{AB}$.

 \subsection{Generalized PolyDot Code  without Security Constraint}\label{Sec_GPD} 
 
 In this subsection, we review the GPD construction first proposed in
 \cite{fahim2017optimal} and later improved in \cite{yu2018straggler,dutta2018unified}. 
 This coding scheme  achieves the best currently known trade-off between recovery threshold $P_R$ and communication load $C_L$ for $P_C=0$, i.e., under no security constraint. The equivalent entangled polynomial codes of \cite{yu2018straggler} have the same properties in terms of $(P_R,P_C)$.
 The GPD codes for $P_C=0$ also achieve the optimal recovery threshold among all linear coding strategies in the cases of $t = 1$ or $d = 1$, also they minimize the recovery threshold for the minimum communication load $C_{L,\min}$ \cite{yu2017polynomial,yu2018straggler}. 
  
The GPD code splits the data matrices $\mathbf{A}$ and $\mathbf{B}$ both horizontally and vertically as
\begin{equation}\label{Polydot}
{ 
	\mathbf{A}=
	\left[ {\begin{array}{cccc}
		\mathbf{A}_{1,1}&\ldots&\mathbf{A}_{1,s}\\
		\vdots&\ddots&\vdots\\
		\mathbf{A}_{t,1}&\ldots&\mathbf{A}_{t,s}
		\end{array} } \right],}
\quad
\mathbf{B}=
\left[ {\begin{array}{cccc}
	\mathbf{B}_{1,1}&\ldots&\mathbf{B}_{1,d}\\
	\vdots&\ddots&\vdots\\
	\mathbf{B}_{s,1}&\ldots&\mathbf{B}_{s,d}
	\end{array} } \right].
\end{equation}
The parameters $s,t$, and $d$ can be set arbitrarily under the constraints $m=ts$ and $n=sd$. Note that polynomial codes set $s=1$, while MatDot codes have $t=d=1$ \cite{dutta2018optimal}.
All  sub-matrices $\mathbf{A}_{ij}$ and $\mathbf{B}_{kl}$ have dimensions $T/t\times S/s$ and $S/s\times D/d$, respectively.
\begin{figure*}[t!]
	\begin{center}
		\includegraphics[width=9.9cm]{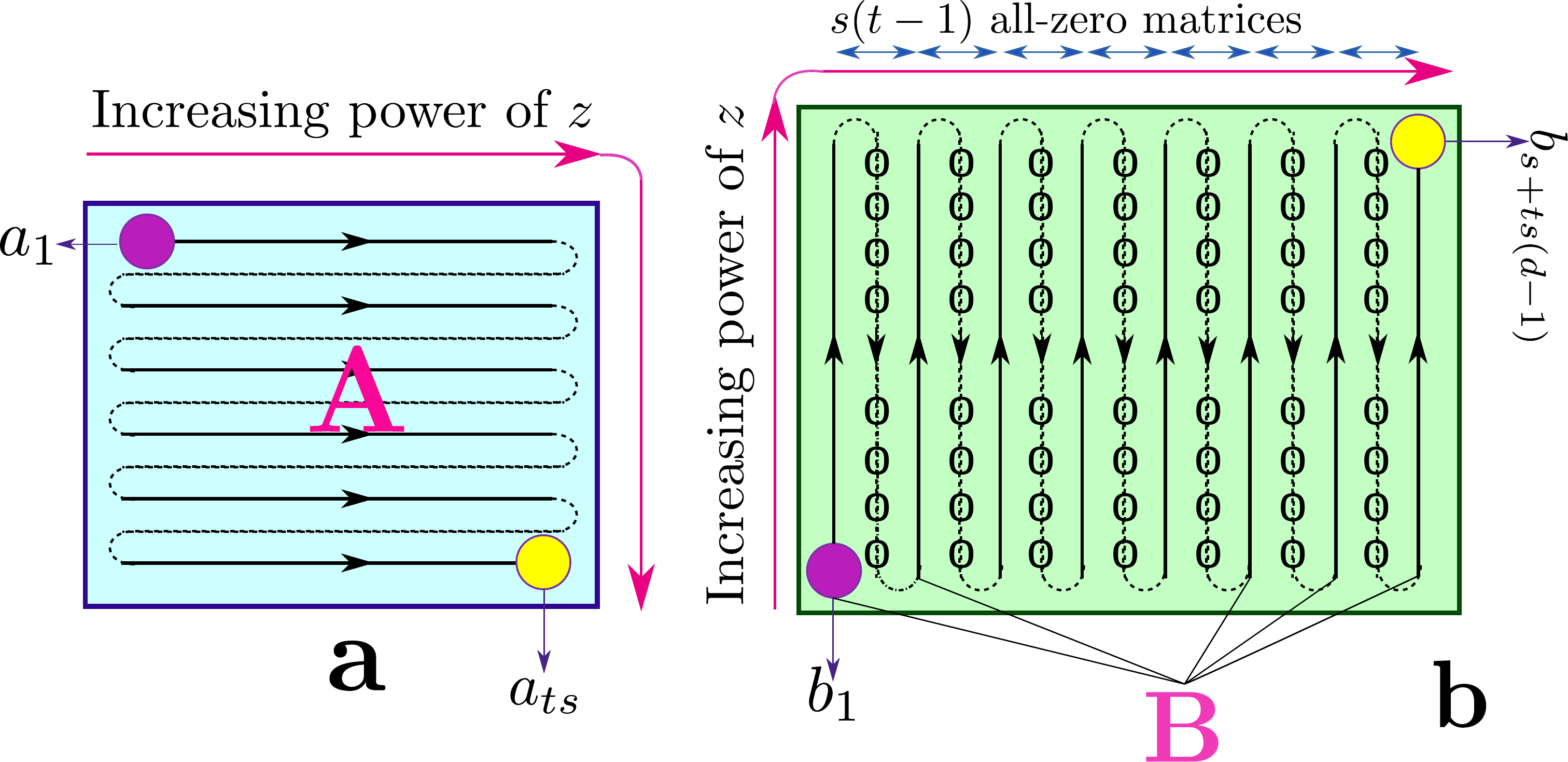}~\caption{\footnotesize 
		Construction of the time sequences $\mathbf{a}$ and $\mathbf{b}$ used to define the generalized PolyDot (GPD) code. The zero dash line in  $\mathbf{b}$ indicates all-zero block sequences. Each solid arrows in $\mathbf{a}$ and $\mathbf{b}$ shows a distinct row of $\mathbf{A}$ and a column of $\mathbf{B}$, respectively.  }~\label{figGPD}
	\end{center}
	\vspace{-5ex}
\end{figure*}
The GPD code computes each block $(i,j)$ of the product $\mathbf{C}=\mathbf{AB}$, namely $\mathbf{C}_{i,j}=\sum_{k=1}^s\mathbf{A}_{i,k}\mathbf{B}_{k,j}$, for $i=1,\ldots,t$ and $j=1,\ldots,d$, in a distributed fashion. 
This is done by means of polynomial encoding and polynomial interpolation. As we review next, the computation of block $\mathbf{C}_{i,j}$ can be interpreted as the evaluation of the middle sample of the convolution $\mathbf{c}_{i,j}=\mathbf{a}_i*\mathbf{b}_j$ between the block sequences $\mathbf{a}_i=[\mathbf{A}_{i,1},\ldots,\mathbf{A}_{i,s}]$ and $\mathbf{b}_j=[\mathbf{B}_{s,j},\ldots,\mathbf{B}_{1,j}]$.
In fact, the $s$th sample of the block sequence $\mathbf{c}_{i,j} $ equals $\mathbf{C}_{i,j}$, i.e., $[\mathbf{c}_{i,j}]_s=\mathbf{C}_{i,j}$.
The computation is carried out distributively in the frequency,  domain by using $z$-transforms with different workers being assigned distinct samples in the frequency domain.

To elaborate, define the block sequence $\mathbf{a}$ obtained by concatenating the block sequences $\mathbf{a}_i$ as $\mathbf{a}=\{\mathbf{a}_1,\mathbf{a}_2,\ldots,\mathbf{a}_t\}$. 
Pictorially, a sequence $\mathbf{a}$ is obtained from the matrix $\mathbf{A}$ by reading the blocks in the left-to-right top-to-bottom order, as seen in Fig.~\ref{figGPD}.
We also introduce the longer time block sequence $\mathbf{b}$ as 
\begin{equation}\label{eqTimeSeqB}
\mathbf{b}=\{\mathbf{b}_1,\mathbf{0},\mathbf{b}_2,\mathbf{0},\ldots,\mathbf{b}_d\},
\end{equation}
with ${\mathbf{0}}$ being a block sequence of $s(t^*-1)$ all-zero block matrices with dimensions $S/s\times D/d$. The sequence $\mathbf{b}$ can be obtained from matrix $\mathbf{B}$ by following the bottom-to-top left-to-right order shown in Fig.~\ref{figGPD} and by adding the all-zero block sequences between any two columns of the matrix  $\mathbf{B}$.  

In the frequency domain, the $z$-transforms of sequences $\mathbf{a}$ and $\mathbf{b}$ are obtained as
\begin{align}
\mathbf{F}_{\mathbf{a}}(z)=&\sum_{r=0}^{ts-1} [\mathbf{a}]_{r+1}z^{r}=\sum_{i=1}^t\sum_{j=1}^s\mathbf{A}_{i,j}z^{s(i-1)+j-1},\label{eqDuttapAz}\\
\mathbf{F}_{\mathbf{b}}(z)=&\sum_{r=0}^{s-1+ts(d-1)}[\mathbf{b}]_{r+1}z^{r}=\sum_{k=1}^s\sum_{l=1}^d\mathbf{B}_{k,l}z^{s-k+ts(l-1)},\label{eqDuttapBz}
\end{align}
respectively.
The master server evaluates the polynomials $\mathbf{F}_{\mathbf{a}}(z)$ and $\mathbf{F}_{\mathbf{b}}(z)$ in $P$ non-zero distinct points $z_1,\ldots z_P\in \mathbb{F}$ and sends the corresponding linearly encoded matrices $\mathbf{A}_p=\mathbf{F}_{\mathbf{a}}(z_p)$ and $\mathbf{B}_p=\mathbf{F}_{\mathbf{b}}(z_p)$ to server $p$. The encoding functions are hence given by the polynomial evaluations \eqref{eqDuttapAz} and \eqref{eqDuttapBz}, for $z_1,\ldots,z_p$. Server $p$ computes the multiplication $\mathbf{F}_{\mathbf{a}}(z_p)\mathbf{F}_{\mathbf{b}}(z_p)$ and sends it to the master server.
The master server computes the inverse $z$-transform for the received products $\{\mathbf{A}_p\mathbf{B}_p\}_{p\in \mathcal{P}_R} =\{\mathbf{F}_{\mathbf{a}}(z_p)\mathbf{F}_{\mathbf{b}}(z_p)\}_{p\in \mathcal{P}_R}$, obtaining the convolution $\mathbf{a}*\mathbf{b}$.
From the convolution $\mathbf{a}*\mathbf{b}$, we can see that the master server is able to compute all the desired blocks $\mathbf{C}_{i,j}$ by reading the middle samples of the convolutions $\mathbf{c}_{i,j}=\mathbf{a}_i*\mathbf{b}_j$ from samples of the sequence $\mathbf{c}=\mathbf{a}*\mathbf{b}$ in the order $[\mathbf{c}]_{s-1}=\mathbf{C}_{1,1},[\mathbf{c}]_{2s-1}=\mathbf{C}_{2,1},\ldots,[\mathbf{c}]_{ts-1}=\mathbf{C}_{t,1},[\mathbf{c}]_{s-1+t^*s}=\mathbf{C}_{1,2},\ldots,[\mathbf{c}]_{ts-1+t^*s}=\mathbf{C}_{t,2},\ldots$ (see also the Appendix for details). Note that, in particular, the zero block subsequences added to sequence $\mathbf{b}$ ensure that no interference from the other convolutions, $\mathbf{c}_{i',j'}$ affects the middle ($s$th) sample of a convolution $\mathbf{c}_{i,j}$ with $i'\neq i$ and $j'\neq j$.
To carry out the inverse transform, the master server needs to collect as many values $\mathbf{F}_{\mathbf{a}}(z_p)\mathbf{F}_{\mathbf{b}}(z_p)$ as there are samples of the sequence $\mathbf{a}*\mathbf{b}$, yielding the recovery threshold
\begin{equation}\label{eq_RT}
P_R=tsd+s-1.
\end{equation}
Equivalently, in terms of the underlying polynomial interpretation, the master server needs to collect a number of evaluations of the polynomial $\mathbf{F}_{\mathbf{a}}(z)\mathbf{F}_{\mathbf{b}}(z)$ equal to the degree of $\mathbf{F}_a(z)\mathbf{F}_b(z)$ plus one.
This computation is of complexity order $\mathcal{O}(TDP_R\log^2(P_R))$  \cite{dutta2018optimal}.
 Furthermore, the communication load is given as
\begin{equation}\label{eq_CL}
C_L=P_R\frac{TD}{td},
\end{equation}
where $TD/(td)$ is the size of each matrix $\mathbf{F}_{\mathbf{a}}(z)\mathbf{F}_{\mathbf{b}}(z)$.
%

\section{Secure PolyDot Code}
In this section, we propose a novel extension of the GPD code that is able to ensure the secrecy constraint for any $P_C<P$. We also derive the corresponding achievable set of triples $(P_C,P_R,C_L)$. As we will see, the projection of this set onto the plane defined by the condition $P_C=0$ includes the set of pairs $(P_R,C_L)$ in \eqref{eq_RT} and \eqref{eq_CL} obtained by the GPD code \cite{dutta2018unified}.
The proposed secure GPD code (SGPD) augments matrices $\mathbf{A}$ and $\mathbf{B}$ by adding $P_C$ random block matrices to the input matrices $\mathbf{A}$ and $\mathbf{B}$, in a manner similar to prior works  \cite{nodehi2018limited,yu2018lagrange,chang2018capacity,kakar2018rate,rafael2018codes}, 
yielding augmented matrices $\mathbf{A}^*$ and $\mathbf{B}^*$. As we will see, a direct application of the GPD codes to these matrices is suboptimal. 

In contrast, we propose a novel way to construct sequences $\mathbf{a}^*$ and $\mathbf{b}^*$ from matrices $\mathbf{A}^*$ and $\mathbf{B}^*$ that enables the definition of a more efficient code by means of the $z$-transform approach discussed in the previous section. To this end, we follow the design criterion of decreasing the recovery threshold $P_R$ for a given communication load $C_L$. Based on the discussion in the previous section, this goal can be realized by decreasing the length of sequence $\mathbf{c}^*=\mathbf{a}^**\mathbf{b}^*$, which can in turn be ensured by reducing the length of the sequence $\mathbf{b}^*$ for a given length of sequence $\mathbf{a}^*$. We accomplish this objective by {$\emph (i)$} adaptively appending rows \textit{or} columns with random elements to matrix $\mathbf{A}$, and, correspondingly columns \textit{or} rows to $\mathbf{B}$, which can reduce the recovery threshold; and {$\emph (ii)$} modifying the zero padding procedure (see Fig.~\ref{figGPD}) for the construction of sequence $\mathbf{b}^*$. In order to account for point {\emph{(i)}}, we consider separately the two cases $s< t$ and $s\geq t$.

Note that the code $\mathcal{C}_c$
associated with $\mathbf{F}_{\mathbf{a}}(z)\mathbf{F}_{\mathbf{b}}(z)$ is
obtained as the star product of the codes described by
$\mathbf{F}_{\mathbf{a}}(z)$ and $\mathbf{F}_{\mathbf{b}}(z)$
\cite{Hollanti_etal_2017}, and therefore our proposed SGPD scheme can be
interpreted as a secret sharing scheme \cite{shamir1979share}  employing the star product code $\mathcal{C}_c$.
%
%

   \begin{figure*}[t!]
  	\begin{center}
  		\includegraphics[width=11.2cm]{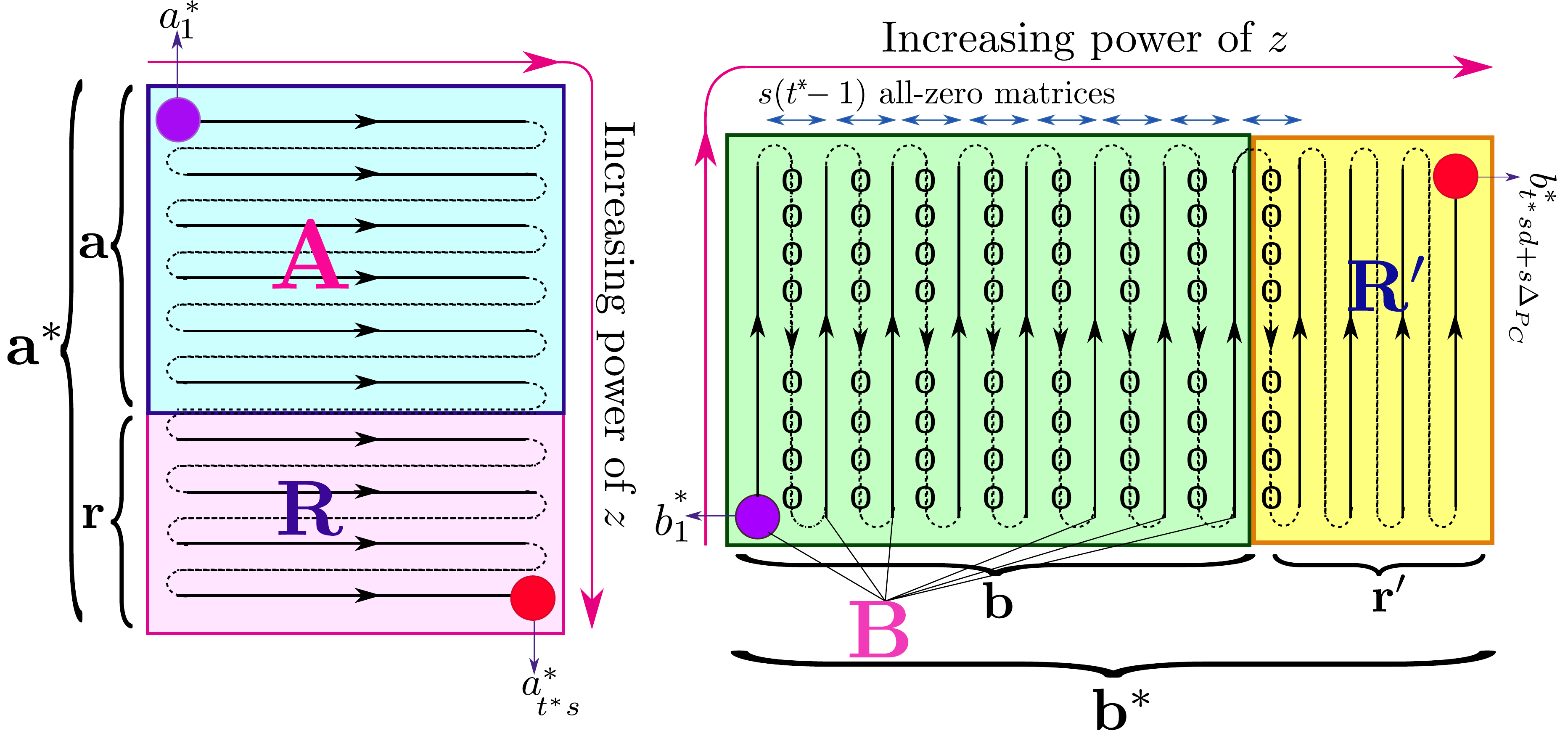}~\caption{\footnotesize
                  Construction of the time block sequences
                  $\mathbf{a}^*=[\mathbf{a},\mathbf{r}]$ and
                  $\mathbf{b}^*=[\mathbf{b},\mathbf{r}']$ in
                  \eqref{eqTimeSeqAR} and \eqref{eqTimeSeqBR} used to define
                  the SGPD code for  the case $s< t$. The zero dashed line in $\mathbf{b}$  and $\mathbf{r}'$ indicate all-zero block sequences.  }~\label{SecGPD} 
  	\end{center}
  	\vspace{-7ex}
  \end{figure*}
\subsection{Secure Generalized PolyDot Code: The $s< t$ Case}

As illustrated in Fig.~\ref{SecGPD}, when $s< t$, we augment input matrices $\mathbf{A}$ and $\mathbf{B}$ by adding 
\begin{equation}\label{eq_delta1}
\Delta_{P_C}\overset{\Delta}{=} \left\lceil 
\frac{P_C}{s} \right\rceil,
\end{equation}
random row and column blocks to matrices $\mathbf{A}$ and $\mathbf{B}$, respectively. Accordingly, the $t^*\times s$ augmented block matrix $\mathbf{A}^*$ with $t^*=t+\Delta_{P_C}$, is obtained as
  \begin{equation}\label{eq_matrixA}
   { 
   	\mathbf{A}^*=\left[ {\begin{array}{c}
   	   		\mathbf{A}\\\mathbf{R} 
   	   		\end{array} } \right]=
   	\left[ {\begin{array}{cccc}
   		\mathbf{A}_{1,1}&\ldots&\mathbf{A}_{1,s}\\
   		\vdots&\ddots&\vdots\\
   		\mathbf{A}_{t,1}&\ldots&\mathbf{A}_{t,s}\\
   		\mathbf{R}_{1,1}&\ldots&\mathbf{R}_{1,s}\\
   		\vdots&\ddots&\vdots\\
   		\mathbf{R}_{\Delta_{{P_C},1}}&\ldots&\mathbf{R}_{\Delta_{{P_C},s}}
   		\end{array} } \right],}
  \end{equation}
  while the $s\times d^*$ augmented matrix $\mathbf{B}^*=[\mathbf{B}~\mathbf{R}']$ with $d^*=d+\Delta_{P_C}$ is obtained as   \begin{equation}\label{eq_matrixB}
  \mathbf{B}^*=
  \left[ {\begin{array}{cccccc}
  	\mathbf{B}_{1,1}&\ldots&\mathbf{B}_{1,d}&\mathbf{R}'_{s,1}&\ldots&\mathbf{R}'_{s,\Delta_{P_C}}\\
  	\vdots&\ddots&\vdots&\vdots&\ddots&\vdots\\
  	\mathbf{B}_{s,1}&\ldots&\mathbf{B}_{s,d}&\mathbf{R}'_{1,1}&\ldots&\mathbf{R}'_{1,\Delta_{P_C}}
  	\end{array} } \right].
  \end{equation} 
  In \eqref{eq_matrixA} and \eqref{eq_matrixB}, if $s$ divides $P_C$, all block matrices $\mathbf{R}_{ij}\in \mathbb{F}^{\frac{T}{t}\times\frac{S}{s}}$ and $\mathbf{R}'_{ij}\in \mathbb{F}^{\frac{S}{s}\times\frac{D}{d}}$ are generated with i.i.d. uniform random elements in $\mathbb{F}$. Otherwise, if $\Delta_{P_C}-P_C/s >0$, the last $s\Delta_{P_C}-P_C$ matrices  in \eqref{eq_matrixA}, with right-to-left ordering in the last row of $\mathbf{R}_{ij}$, and in \eqref{eq_matrixB} with top-to-bottom ordering in the last column of  $\mathbf{R}'_{ij}$, are all-zero block matrices.

As illustrated in Fig.~\ref{SecGPD}, in the SGPD scheme, the block sequence $\mathbf{a}^*$ is defined in the same way as in the conventional GPD, yielding  
  \begin{equation}\label{eqTimeSeqAR}
  \mathbf{a}^*=\{\mathbf{a}_1,\ldots,\mathbf{a}_t,\mathbf{r}_1,\ldots,\mathbf{r}_{\Delta_{P_C}}\},
  \end{equation}
where $\mathbf{r}_i$ is the $i$th row of the block matrix $\mathbf{R}$, $i=1,\ldots,\Delta_{P_C}$. We also define the time block sequence $\mathbf{b}^*=\{\mathbf{b},\mathbf{r}'\}$ as
 \begin{equation}\label{eqTimeSeqBR}
  \mathbf{b}^*=\{\mathbf{b}_1,\mathbf{0},\mathbf{b}_2,\mathbf{0},\ldots,\mathbf{b}_d,\mathbf{0},\mathbf{r}'_1,\mathbf{r}'_2,\ldots,\mathbf{r}'_{\Delta_{P_C}}\},
  \end{equation}
   where $\mathbf{0}$ is block sequences of $s(t^*-1)$ all-zero block matrices, respectively, with dimensions $S/s\times D/d$, while $\mathbf{r}'_j$ is the $j$th column of the random matrix $\mathbf{R}'$. The key novel idea of this construction is that no zero matrices are introduced between columns of matrix $\mathbf{R}'$.
   As shown in Theorem \ref{SecurThm} below, this construction allows the master server to recover all the desired submatrices $\mathbf{C}_{i,j}$ for $i=1,\ldots,t$ and $j=1,\ldots,d$ from the middle samples of the convolutions $\mathbf{c}_{i,j}=\mathbf{a}_i*\mathbf{b}_j$ (see Fig.~ \ref{figconv} for an illustration).

%
%
 
\begin{Theorem}\label{SecurThm}
For a given security level $P_C<P$, the proposed SGPD code achieves the recovery threshold $P_R$
\begin{equation}\label{RT_1}
{\small
\begin{cases}
tsd+s-1,&\text{ if } P_C=0,\\
t^*s(d+1)+s\Delta_{P_C}-1,& \text{ if } P_C\geq 1 \text{ and }\Delta_{P_C}=\frac{P_C}{s},\\
t^*s(d+1)-s\Delta_{P_C}+2P_C-1,&  \text{ if } P_C\geq 1 \text{ and } \Delta_{P_C}>\frac{P_C}{s},
\end{cases}
}
\end{equation}
and the communication load \eqref{eq_CL},
where  $t^*=t+\Delta_{P_C}$ and $d^*=d+\Delta_{P_C}$ for any integer values $t,s$, and $d$ such that $s< t$, $m=ts$, and $n=sd$.
\end{Theorem}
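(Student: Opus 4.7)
My plan is to build on the $z$-transform / polynomial-interpolation framework already reviewed for the non-secure GPD code. The central observation is that the master server's job is equivalent to reconstructing the coefficients of the product polynomial $\mathbf{F}_{\mathbf{a}^*}(z)\mathbf{F}_{\mathbf{b}^*}(z)$ from its evaluations at the distinct points $z_1,\ldots,z_{P_R}$; hence the recovery threshold equals $\deg(\mathbf{F}_{\mathbf{a}^*}\mathbf{F}_{\mathbf{b}^*})+1$, and the three cases in \eqref{RT_1} should fall out of a careful degree count of $\mathbf{F}_{\mathbf{a}^*}(z)$ and $\mathbf{F}_{\mathbf{b}^*}(z)$.

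For the degree count I would proceed as follows. In the case $P_C=0$ the construction reduces to the plain GPD code, so the formula $tsd+s-1$ follows immediately from \eqref{eq_RT}. For $P_C\geq 1$ with $\Delta_{P_C}=P_C/s$, the block sequence $\mathbf{a}^*$ has length $t^*s$ while $\mathbf{b}^*$ has length $d\cdot s + d\cdot s(t^*-1) + \Delta_{P_C}\cdot s = dst^* + s\Delta_{P_C}$, so the convolution has $t^*s + dst^* + s\Delta_{P_C}-1 = t^*s(d+1)+s\Delta_{P_C}-1$ samples, giving the second case. When $\Delta_{P_C}>P_C/s$, the last $s\Delta_{P_C}-P_C$ blocks of $\mathbf{a}^*$ and of $\mathbf{b}^*$ are forced to zero, so the effective degree of $\mathbf{F}_{\mathbf{a}^*}(z)$ drops to $ts+P_C-1$ and that of $\mathbf{F}_{\mathbf{b}^*}(z)$ to $dst^*+P_C-1$; adding them and then adding one yields $t^*s(d+1)-s\Delta_{P_C}+2P_C-1$ as claimed. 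The communication load \eqref{eq_CL} then follows because each worker returns a single $TD/(td)$-size block $\mathbf{A}_p\mathbf{B}_p$, and the master uses exactly $P_R$ of them.

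The remaining two obligations are decodability and security. For decodability I would verify that, exactly as in the non-secure code, the target block $\mathbf{C}_{i,j}=[\mathbf{c}_{i,j}]_s$ sits at a sample position of $\mathbf{c}^*=\mathbf{a}^**\mathbf{b}^*$ where no cross-convolution involving the random rows $\mathbf{r}_i$ or the random columns $\mathbf{r}'_j$ can contribute. The zero pad of length $s(t^*-1)$ between consecutive $\mathbf{b}_j$ blocks (chosen with $t^*$ rather than $t$) is precisely what separates the useful convolutions $\mathbf{a}_i*\mathbf{b}_j$ from each other even after the random rows are appended to $\mathbf{a}^*$; the absence of any zero pad before $\mathbf{r}'_1,\ldots,\mathbf{r}'_{\Delta_{P_C}}$ is the key novelty, since a pad there would inflate the length of $\mathbf{b}^*$ without improving decodability. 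I would make this precise by writing down the position of $[\mathbf{c}^*]_{s-1+(i-1)s+(j-1)st^*}$ and checking block by block that the only index pair $(i',j')$ with a non-zero contribution at that position, and with $i'\in\{1,\ldots,t^*\}$, $j'$ indexing either $\mathbf{b}$ or $\mathbf{r}'$, corresponds to the desired useful convolution; this is the most bookkeeping-heavy step and will be the main obstacle.

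For security I would invoke the standard Shamir/McEliece--Sarwate argument used throughout \cite{shamir1979share,yu2018lagrange}: any $P_C$ evaluations $(\mathbf{F}_{\mathbf{a}^*}(z_p),\mathbf{F}_{\mathbf{b}^*}(z_p))_{p\in\mathcal{P}}$ can be written as an invertible linear map (a Vandermonde-type system on the random coefficients of $\mathbf{F}_{\mathbf{a}^*}$ and $\mathbf{F}_{\mathbf{b}^*}$, of which there are exactly $P_C$ in each by construction) applied to $(\mathbf{R},\mathbf{R}')$ plus a deterministic function of $(\mathbf{A},\mathbf{B})$; since $(\mathbf{R},\mathbf{R}')$ is uniform i.i.d.\ over $\mathbb{F}$, the observations are uniform and independent of $(\mathbf{A},\mathbf{B})$, so \eqref{eq_security_constraint} holds. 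The field-size assumption $|\mathbb{F}|>P$ guarantees invertibility of the Vandermonde systems used both here and in the polynomial-interpolation decoder.
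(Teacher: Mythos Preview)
Your proposal is correct and follows essentially the same route as the paper. The paper likewise obtains $P_R$ as $\deg(\mathbf{F}_{\mathbf{a}^*}\mathbf{F}_{\mathbf{b}^*})+1$ via the same length/degree count, establishes decodability by splitting $\mathbf{F}_{\mathbf{a}^*}=\mathbf{F}_1+\mathbf{F}_2$ and $\mathbf{F}_{\mathbf{b}^*}=\mathbf{F}_3+\mathbf{F}_4$ (data vs.\ random parts) and checking that the cross-products $\mathbf{F}_1\mathbf{F}_4,\mathbf{F}_2\mathbf{F}_3,\mathbf{F}_2\mathbf{F}_4$ never hit the exponent $si-1+(l-1)t^*s$, and proves security through an entropy chain in the style of \cite{chang2018capacity}; your convolution-index bookkeeping and direct Vandermonde/Shamir masking argument are exactly the same content in slightly different notation.
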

 \begin{proof}
 The proof is given in Appendix \ref{app}.
 \end{proof}
\begin{Remark}
When $P_C\geq 1$ a direct application of the GPD construction in Fig.~\ref{figGPD} would yield the larger recovery threshold
\begin{equation}
P_R = 
\begin{cases}
t^*sd^*+s-1,& \text{ if }\Delta_{P_C}=\frac{P_C}{s},\\
dst^*+s-1-2(s\Delta_{P_C}-P_C),&  \text{ if } \Delta_{P_C}>\frac{P_C}{s}.
\end{cases}
\end{equation}
\end{Remark}


\subsection{Secure Generalized PolyDot Code: The $s\geq t$ Case}
\begin{figure*}[t!]
	\begin{center}
		\includegraphics[width=9.8cm]{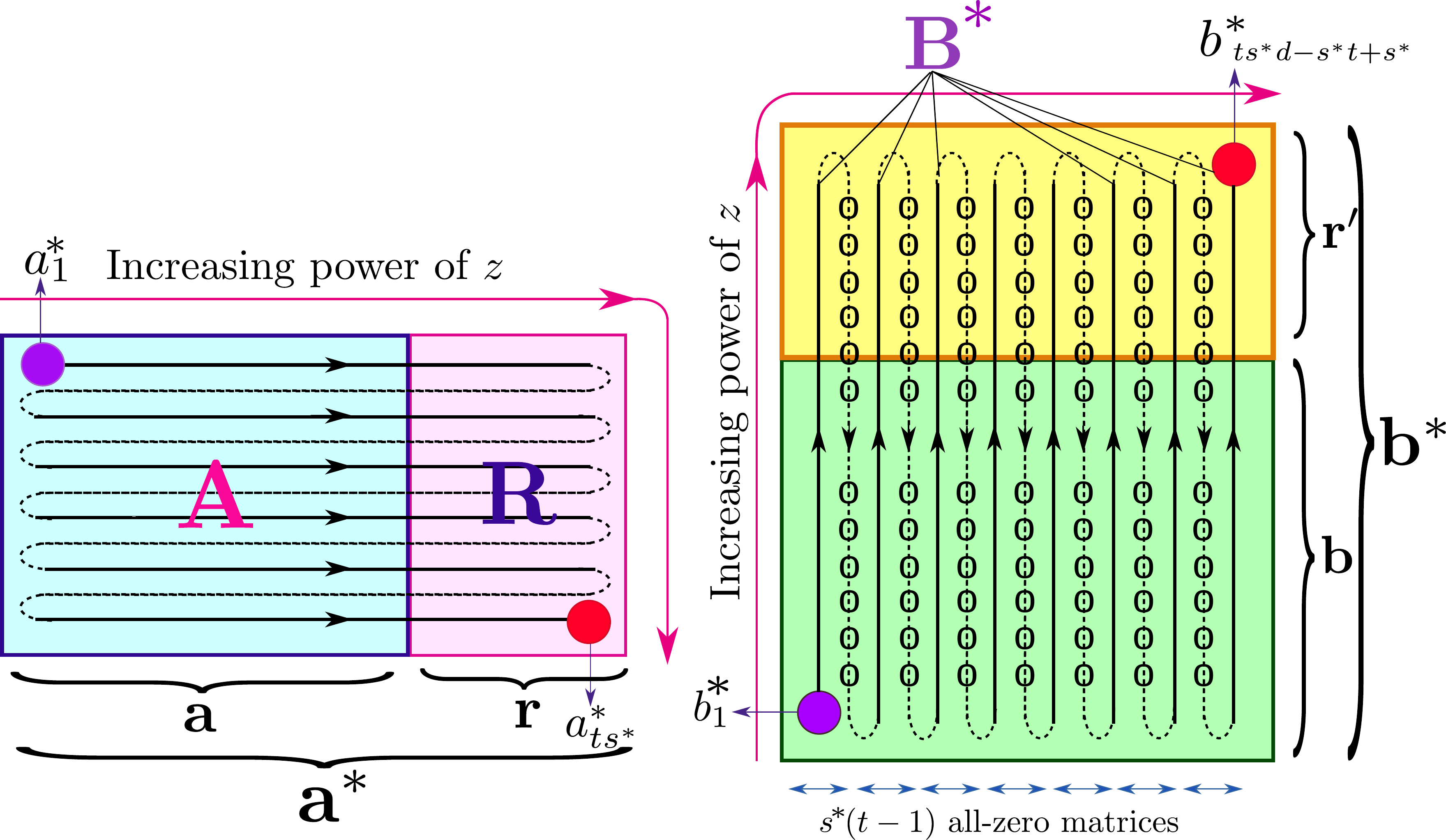}~\caption{\footnotesize
                  Construction of the time block sequences $\mathbf{a}^*$
                  and $\mathbf{b}^*$ in \eqref{eqTimeSeqAR2} and
                  \eqref{eqTimeSeqBB} used to define the SGPD code for  the case $s\geq t$. The solid line and the zero dashed line in $\mathbf{b}^*$ indicate  columns of $\mathbf{B}^*$ and all-zero block sequences, respectively.   }~\label{SecGPD2}
	\end{center}
	\vspace{-5ex}
\end{figure*}
As illustrated in Fig.~\ref{SecGPD2}, when $s\geq t$, we instead augment input matrices $\mathbf{A}$ and $\mathbf{B}$ by adding 
\begin{equation}\label{delta2}
\Delta'_{P_C}\overset{\Delta}{=} \left\lceil \frac{P_C}{\min{\{ t, d \}}} \right\rceil.
\end{equation}
column and row blocks to matrices $\mathbf{A}$ and $\mathbf{B}$. This can be seen to yield a smaller recovery threshold. Accordingly, the $t\times s^*$ augmented block matrix $\mathbf{A}^*=[\mathbf{A}~\mathbf{R}]$ with $s^*=s+\Delta'_{P_C}$, is obtained as 
\begin{equation}\label{eq_matrixA1}
{ 
	\mathbf{A}^*=
	\left[ {\begin{array}{cccccc}
		\mathbf{A}_{1,1}&\ldots&\mathbf{A}_{1,s}&\mathbf{R}_{1,1}&\ldots&\mathbf{R}_{1,\Delta'_{P_C}}\\
		\vdots&\ddots&\vdots&\vdots&\ddots&\vdots\\
		\mathbf{A}_{t,1}&\ldots&\mathbf{A}_{t,s}&\mathbf{R}_{t,1}&\ldots&\mathbf{R}_{t,\Delta'_{P_C}}
		\end{array} } \right] ,}
\end{equation}
while the $s^*\times d$ augmented block matrix $\mathbf{B}^*$ is defined as
\begin{equation}\label{eq_matrixB1}
\mathbf{B}^*=\left[ {\begin{array}{c}
	\mathbf{R}' \\\mathbf{B}
	\end{array} } \right]=
\left[ {\begin{array}{ccc}
	\mathbf{R}'_{\Delta'_{{P_C},1}}&\ldots&\mathbf{R}'_{\Delta'_{{P_C},d}}\\
	\vdots&\ddots&\vdots\\
	\mathbf{R}'_{1,1}&\ldots&\mathbf{R}'_{1,d}\\
	\mathbf{B}_{1,1}&\ldots&\mathbf{B}_{1,d}\\
	\vdots&\ddots&\vdots\\
	\mathbf{B}_{s,1}&\ldots&\mathbf{B}_{s,d}
	\end{array} } \right].
\end{equation}
As for \eqref{eq_matrixA1} and \eqref{eq_matrixB1}, if
$\Delta'_{P_C}-P_C/\min\{t,d\}>0$, the last $s\Delta'_{P_C}-P_C$ block
matrices in \eqref{eq_matrixA1}, with  bottom-to-top right-to-left ordering
in $\mathbf{R}$, and in \eqref{eq_matrixB1} with right-to-left top-to-bottom
ordering in $\mathbf{R}'$, are all-zero block matrices. The construction of
sequences $\mathbf{a}^*$ and $\mathbf{b}^*$ is analogous to the GPD in the non-secure case. In particular, as seen in Fig.~\ref{SecGPD2}, the time block sequence $\mathbf{a}^*$ is  
%
%
 \begin{equation}\label{eqTimeSeqAR2}
 \mathbf{a}^*=\{\mathbf{a}_1,\mathbf{r}_1,\mathbf{a}_2,\mathbf{r}_2,\ldots,\mathbf{a}_d,\mathbf{r}_d \},
 \end{equation}
whereas the block sequence $\mathbf{b}^*$ is defined as 
 \begin{equation}\label{eqTimeSeqBB}
  \mathbf{b}^*=\{\mathbf{b}_1,\mathbf{r}'_1,\mathbf{0},\mathbf{b}_2,\mathbf{r}'_2,\mathbf{0},\ldots, \mathbf{b}_d,\mathbf{r}'_d \}.
  \end{equation}
  Here,  $\mathbf{0}$ is a block sequence of $(t-1)s^*$ all-zero block matrices with dimensions $S/s\times D/d$. 
  

\begin{Theorem}\label{SecurThm2}
For a given security level $P_C<P$, the proposed SGPD code achieves the recovery threshold
	\begin{equation}\label{eq_RT2}
	P_R = 
	\begin{cases}
	s^*(t^2+1)-3 ,& \text{ if }\Delta'_{P_C}>\frac{P_C}{s} \text{ and } t=d  \\
	tds^*+s^*-1,&  \text{ otherwise},
	\end{cases}
	\end{equation}
and the communication load \eqref{eq_CL},
where $s^*=s+\Delta'_{P_C}$ for any integer values $t,s,$ and $d$ such that $s\geq t$, $m=ts$, and $n=sd$.
\end{Theorem}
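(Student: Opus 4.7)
The plan is to mirror the approach used in the proof of Theorem~\ref{SecurThm}, adapted to the $s\geq t$ augmentation. I would start by writing the $z$-transforms of the sequences $\mathbf{a}^*$ and $\mathbf{b}^*$ in \eqref{eqTimeSeqAR2} and \eqref{eqTimeSeqBB}. Since $\mathbf{a}^*$ concatenates the $t$ rows of the augmented matrix $\mathbf{A}^*$ (each of length $s^*$), one gets a polynomial $\mathbf{F}_{\mathbf{a}^*}(z)$ of apparent degree $ts^*-1$. Reading $\mathbf{B}^*$ column-by-column (bottom-to-top) with an all-zero gap of length $(t-1)s^*$ inserted between consecutive columns, $\mathbf{F}_{\mathbf{b}^*}(z)$ has apparent degree $s^*[d+(d-1)(t-1)]-1=s^*(dt-t+1)-1$. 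The recovery threshold $P_R$ equals the number of coefficients of $\mathbf{F}_{\mathbf{a}^*}(z)\mathbf{F}_{\mathbf{b}^*}(z)$, namely the sum of the degrees plus one, yielding $tds^*+s^*-1$, which matches the ``otherwise'' branch of \eqref{eq_RT2}.

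Next I would verify that the master server can extract each block $\mathbf{C}_{i,j}=\sum_{k=1}^{s}\mathbf{A}_{i,k}\mathbf{B}_{k,j}$ from the inverse $z$-transform of the product polynomial. Expanding $\mathbf{F}_{\mathbf{a}^*}(z)\mathbf{F}_{\mathbf{b}^*}(z)$ as a double sum over the blocks of $\mathbf{A}^*$ and $\mathbf{B}^*$, I identify the exponent of $z$ at which the desired sum $\mathbf{C}_{i,j}$ appears and verify that no cross-product involving a random block in $\mathbf{R}$ or $\mathbf{R}'$, or entries from another column $l\neq j$ of $\mathbf{B}$, contributes to that exponent. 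The zero-padding width $(t-1)s^*$ is designed precisely to give each $\mathbf{C}_{i,j}$ an isolated degree in the convolution, which is the key bookkeeping step, directly analogous to the argument in the appendix for Theorem~\ref{SecurThm}. The communication load \eqref{eq_CL} then follows immediately, and the security constraint \eqref{eq_security_constraint} follows from the standard secret-sharing argument: any collusion of $P_C$ workers sees $P_C$ evaluations of each of $\mathbf{F}_{\mathbf{a}^*}(z)$ and $\mathbf{F}_{\mathbf{b}^*}(z)$, and since exactly $P_C$ uniform random blocks enter each polynomial through an invertible Vandermonde system, the conditional distribution of $(\mathbf{A},\mathbf{B})$ given the observations remains uniform.

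For the special case $t=d$ with $\Delta'_{P_C}>P_C/s$, I would argue that the prescribed placement of the designated zero blocks in $\mathbf{R}$ and $\mathbf{R}'$---along the rightmost column and topmost row respectively, following the ``bottom-to-top right-to-left'' and ``right-to-left top-to-bottom'' orderings---causes the very last block of $\mathbf{a}^*$, namely $\mathbf{R}_{t,\Delta'_{P_C}}$, and the very last block of $\mathbf{b}^*$, namely $\mathbf{R}'_{\Delta'_{P_C},d}$, to be identically zero. Each vanishing leading coefficient drops the degree of the corresponding polynomial by one, so the product polynomial's degree decreases by two, yielding $P_R=s^*(t^2+1)-3$. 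I expect the main obstacle to be the interference-freeness bookkeeping above: the interleaving of rows of $\mathbf{A}$ with rows of $\mathbf{R}$ in $\mathbf{a}^*$ makes the exponent arithmetic more delicate than in the non-interleaved construction of Theorem~\ref{SecurThm}, and a careful case analysis is needed to verify that the zero-padding width $(t-1)s^*$ precisely isolates every target $\mathbf{C}_{i,j}$. A secondary subtlety is arguing why the $-3$ reduction is specific to the symmetric case $t=d$---in the asymmetric case, the placements of the zero random blocks do not simultaneously align with the highest-degree monomials of both polynomials, so no joint degree reduction occurs and the generic formula stays tight.
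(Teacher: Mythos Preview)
Your proposal is correct and follows essentially the same approach as the paper: the proof in Appendix~\ref{app1} simply writes down the $z$-transforms \eqref{eqsec2_pAM1li}--\eqref{eqsec2_pBM1li}, identifies the exponent $s^*i-1+(l-1)ts^*$ at which $\mathbf{C}_{i,l}$ appears, and then defers everything else (interference-freeness, communication load, security) to the argument of Theorem~\ref{SecurThm}. Your outline is in fact more explicit than the paper's own proof, in particular regarding the degree count and the mechanism behind the $-2$ reduction in the $t=d$ branch, which Appendix~\ref{app1} does not spell out at all.
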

 \begin{proof}
 The proof is presented in Appendix \ref{app1}.
 \end{proof}
%

\begin{Example}
We now provide some numerical results of the proposed SGPD. We set $P=3000$ and $m=n=36$. The trade-off between communication load $C_L$ and recovery threshold $P_R$ for both non-secure conventional GPD codes $(P_C=0)$ and proposed SGPD code with $P_C=11$ and $P_C=29$ is illustrated in Fig.~\ref{figSecRT_CO}. 
The figure quantifies the loss in terms of achievable pairs $(P_R,C_L)$ that is caused by the security constraint. We also show the performance of the recently proposed $\text{GASP}_{\text{small}}$ scheme in \cite{rafael2018codes}, for $P_C=0$, $P_C=11$, and $P_C=29$.
\end{Example}

\section{Concluding Remarks}
To this best of the authors' knowledge, this work presents the best currently known trade-off curve between communication load and recovery threshold as a function of the desired input privacy level for the problem of distributed matrix multiplication. The main result recovers prior art, including \cite{fahim2017optimal,dutta2018optimal,dutta2018unified}. Among topics for future work, we mention here the establishment of matching converse bounds \cite{yang2019secure} and the consideration of impairments in the communication channel between workers \cite{ha2018wireless}.
\begin{figure}[t!]
	\begin{center}
		\includegraphics[scale=0.75]{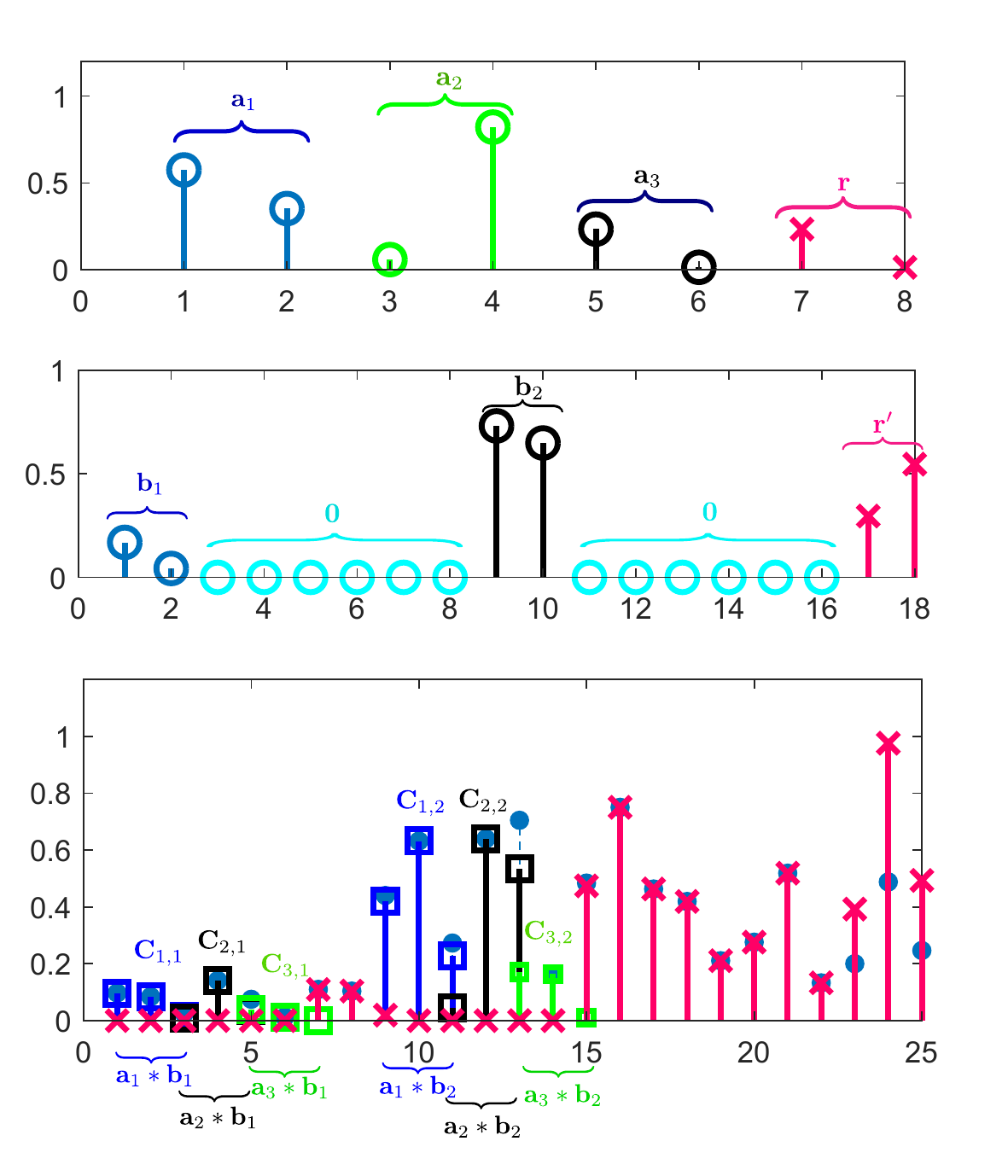}\vspace{-.1cm}~\caption{\footnotesize{Illustration
                    of correct recovery threshold for $t=3,s=2,d=2$, and
                    $P_C=2$. Dashed blue stems with filled markers represent
                    the convolution $\mathbf{c}^*$. Individual convolutions
                    $\mathbf{c}_{i,j}$ are shown in different colors with
                    square markers. Contributions from one or both random
                    matrices are show as red crosses. The desired
                    submatrices $\mathbf{C}_{i,j}$ are seen to equal the
                    corresponding samples from the sequence $\mathbf{c}^*$,
                    associated with the center points of the individual convolutions.}}~\label{figconv}
	\end{center}
	\vspace{-5ex}
\end{figure}

\appendices

\section{Proof of Theorem \ref{SecurThm}}\label{app} 
 
 The $z$-transform of sequences $\mathbf{a}^*$ and $\mathbf{b}^*$ are given respectively as
\begin{align}
   \mathbf{F}_{\mathbf{a}^*}(z)&
   = \underbrace{\sum_{i=1}^{t}\sum_{j=1}^{s} \mathbf{A}^*_{i,j}z^{s(i-1)+(j-1)}}_{\overset{\Delta}{=}~\mathbf F_1(z)}\nonumber\\
   &+\underbrace{\sum_{i=t+1}^{t^*}\sum_{j=1}^{s} \mathbf{A}^*_{i,j} z^{s(i-1)+j-1}}_{\overset{\Delta}{=}~\mathbf F_2(z)},\label{eqsec1_pAM1li}\\
 	\mathbf{F}_{\mathbf{b}^*}(z)&= 	 	 \underbrace{\sum_{k=1}^{s}\sum_{l=1}^{d}\mathbf{B}^*_{k,l}z^{s-k+t^*s(l-1)}}_{\overset{\Delta}{=} ~\mathbf F_3(z)}\nonumber\\
 	&+\underbrace{\sum_{k=1}^{s}\sum_{l=d+1}^{d^*}\mathbf{B}^*_{k,l} z^{t^*sd+s(l-d)-k }}_{\overset{\Delta}{=}~\mathbf F_4(z)}.\label{eqsec1_pBM1li}
 \end{align}
  \begin{figure}[t!]
  	\begin{center}
  		\includegraphics[scale=0.63]{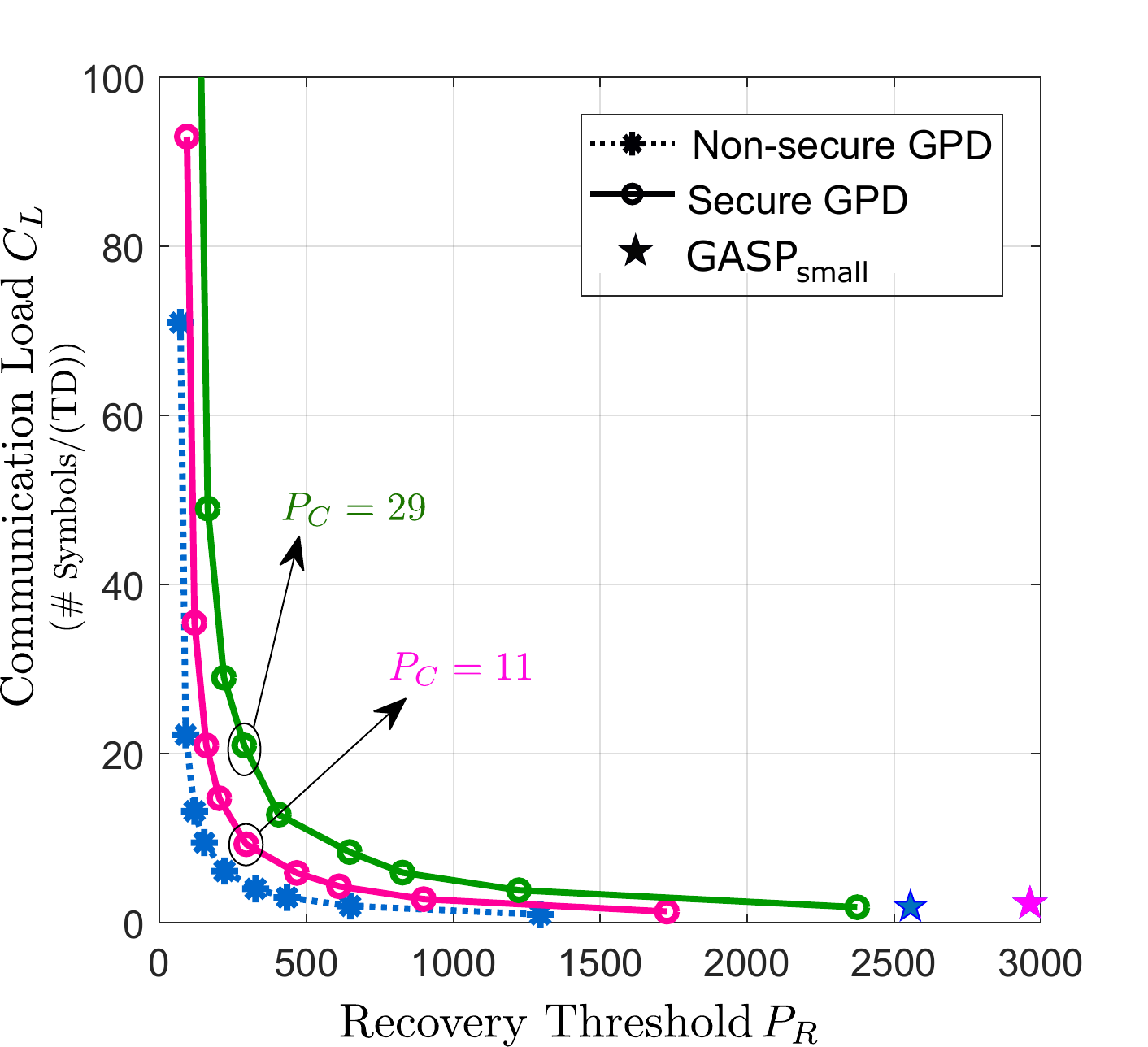}\vspace{-.1cm}~\caption{\footnotesize{Communication
                    load $C_L$ versus recovery threshold $P_R$ for both
                    non-secure generalized PolyDot (GPD) and secure
                    generalized PolyDot (SGPD) codes when $P=3000$ and
                    $m=n=36$. The performance for $\text{GASP}_{\text{small}}$ codes \cite{rafael2018codes} is
                    shown for $P_C=0$ and $P_C=11$. For $P_C=29$ they achieve 
  		$C_L=2.8 , P_R=36291$ (not shown). }}~\label{figSecRT_CO}
  	\end{center}
  	\vspace{-5ex}
  \end{figure}
 The master server evaluates $\mathbf{F}_{\mathbf{a}^*}(z)$ and $\mathbf{F}_{\mathbf{b}^*}(z)$ at $P$ non-zero distinct points $z_1,\ldots,z_P\in \mathbb{F}$, which define the encoding functions, and sends both matrices $\mathbf{A}_p=\mathbf{F}_{\mathbf{a}^*}(z_p)$ and $\mathbf{B}_p=\mathbf{F}_{\mathbf{b}^*}(z_p)$ to worker $p$. Worker $p$ performs the multiplication $\mathbf{F}_{\mathbf{a}^*}(z_p)\mathbf{F}_{\mathbf{b}^*}(z_p)$, and sends the results back to the master server. 
 To reconstruct all blocks $\mathbf{C}_{i,j}$ of matrix $\mathbf{C}=\mathbf{AB}$, the master server carries out polynomial interpolation, or equivalently it computes the inverse $z$-transform, upon  receiving a number of multiplication results equal to at least the length of the sequence $\mathbf{c}^*=\mathbf{a}^**\mathbf{b}^*$. 
As we detail next, the $(i,l)$ block $\mathbf{C}_{i,l}=\sum_{r=1}^{s}\mathbf{A}_{i,r}\mathbf{B}_{r,l}$, for all $i=1,\ldots,t$ and $l=1,\ldots,d$, of matrix $\mathbf{C}=\mathbf{AB}$ can be seen equal to the $(si-1+(l-1)t^*s)$th sample of the convolution $\mathbf{c}^*=\mathbf{a}^**\mathbf{b}^*$. An illustration can be found in Fig.~\ref{figconv}.

To see this, we first note that, by the properties of GPD codes, matrix $\mathbf{C}_{i,l}$ is the coefficient of the monomial $z^{si-1+(l-1)t^*s}$ in $\mathbf{F}_{1}(z)\mathbf{F}_{3}(z)$. Note that this holds since the polynomial $\mathbf{F}_1(z)$ and $\mathbf{F}_3(z)$ are defined as for GPD codes. We now need to show that no other contribution to this term arises from the products $\mathbf{F}_1(z)\mathbf{F}_4(z)$, $\mathbf{F}_2(z)\mathbf{F}_3(z)$, and $\mathbf{F}_2(z)\mathbf{F}_4(z)$. 
The terms in the product $\mathbf{F}_1(z)\mathbf{F}_4(z)$ have exponents $(t^*sd+s(i-1)+s(l-d)-1)$, for $i=1,\ldots,t$ and $l=d+1,\ldots,d^*$, which do not include the desired values $(si-1+(l-1)t^*s)$ for $i=1,\ldots,t$ and $l=1,\ldots,d$.
A similar discussion applies to the product $\mathbf{F}_2(z)\mathbf{F}_3(z)$, whose exponents are $(s(i+t^*l-t^*)-1)$, for $i=t+1,\ldots,t^*$ and $l=1,\ldots,d$, and $\mathbf{F}_2(z)\mathbf{F}_4(z)$, whose exponents are $(t^*sd+s(i-1)+s(l-d)-1)$, for $i=t+1,\ldots,t^*$ and $l=d+1,\ldots,d^*$.

 In order to recover the convolution $\mathbf{c}^*$, the master server needs to collect a number of values of the product $\mathbf{F}_\mathbf{a}(z)\mathbf{F}_{\mathbf{b}}(z)$ equal to the length of the sequence $\mathbf{c}^*$, which can be computed as the degree $\deg \left(\mathbf{F}_\mathbf{a}(z)\mathbf{F}_{\mathbf{b}}(z)\right)+1$, where $\deg (\mathbf{F}_\mathbf{a}(z)\mathbf{F}_{\mathbf{b}}(z))$ is 
\begin{equation}\label{eq_deg}
 \begin{cases}
 t^*s(d+1)+s\Delta_{P_C}-1,& \text{ if }\Delta_{P_C}=\frac{P_C}{s},\\
 dst^*-s\Delta_{P_C}+2P_C+t-2,& \text{ if }\Delta_{P_C}>\frac{P_C}{s},
 \end{cases}
\end{equation}
	  which for $P_C\geq 1$ implies the recovery threshold $P_R$ in \eqref{RT_1}. 
	  The communication load $C_L$ in \eqref{eq_CL} follows from the fact that there are $TD/(td)$ entries in $\mathbf{F}_{\mathbf{a}^*}(z_p)\mathbf{F}_{\mathbf{b}^*}(z_p)$, for all $p\in [1,P_R]$.  
	  
The security constraint \eqref{eq_security_constraint} can be proved in a manner similar to \cite{chang2018capacity} by the following steps:
\begin{align}\label{eq_sec}
&I(\mathbf{A},\mathbf{B};\mathbf A_{\mathcal{P}},\mathbf B_{\mathcal{P}})\nonumber\\
=&H(\mathbf A_{\mathcal{P}},\mathbf B_{\mathcal{P}})-H(\mathbf A_{\mathcal{P}},\mathbf B_{\mathcal{P}}|\mathbf{A},\mathbf{B})\nonumber\\
\overset{(a)}{=}&H(\mathbf A_{\mathcal{P}},\mathbf B_{\mathcal{P}})-H(\mathbf A_{\mathcal{P}},\mathbf B_{\mathcal{P}}|\mathbf{A},\mathbf{B})\nonumber\\
&+H(\mathbf A_{\mathcal{P}},\mathbf B_{\mathcal{P}}|\mathbf{A},\mathbf{B},\mathbf{R}_1,\ldots,\mathbf{R}_{P_C},\mathbf{R}'_1,\ldots,\mathbf{R}'_{P_C})\nonumber\\
\overset{}{=}&H(\!\mathbf A_{\mathcal{P}},\mathbf B_{\mathcal{P}}\!)-I(\!\mathbf A_{\mathcal{P}},\mathbf B_{\mathcal{P}};\mathbf{R}_1,\ldots,\mathbf{R}_{P_C},\mathbf{R}'_1,\ldots,\mathbf{R}'_{P_C}|\mathbf{A},\mathbf{B})\nonumber\\
\overset{}{=}&H(\mathbf A_{\mathcal{P}},\mathbf B_{\mathcal{P}})-H(\mathbf{R}_1,\ldots,\mathbf{R}_{P_C},\mathbf{R}'_1,\ldots,\mathbf{R}'_{P_C}|\mathbf{A},\mathbf{B})\nonumber\\&+H(\mathbf{R}_1,\ldots,\mathbf{R}_{P_C},\mathbf{R}'_1,\ldots,\mathbf{R}'_{P_C}|\mathbf{A},\mathbf{B},\mathbf A_{\mathcal{P}},\mathbf B_{\mathcal{P}})\nonumber\\
\overset{(b)}{=}& 
H(\mathbf A_{\mathcal{P}},\mathbf B_{p})-H(\mathbf R_{1},\ldots,\mathbf{R}_{P_C},\mathbf R'_{1},\ldots,\mathbf{R}'_{P_C})\nonumber\\
\overset{(c)}{\leq}&H(\mathbf A_{\mathcal{P}})+H(\mathbf B_{\mathcal{P}})-\sum_{p=1}^{P_C}H(\mathbf{R}_p)-\sum_{p=1}^{P_C}H(\mathbf{R}'_p)\nonumber\\
\overset{(d)}{=}&H(\mathbf A_{\mathcal{P}})+H(\mathbf B_{\mathcal{P}})-P_C\frac{TS}{m}\log|\mathbb{F}|-P_C\frac{SD}{n}\log|\mathbb{F}|\nonumber\\
\overset{(e)}{\leq}&\sum_{p=1}^{P_C}H(\mathbf{A}_p)+\sum_{p=1}^{P_C}H(\mathbf{B}_p)-P_C\frac{TS}{m}\log|\mathbb{F}|-P_C\frac{SD}{n}\log|\mathbb{F}|\nonumber\\
\overset{(f)}{=}&P_C\frac{TS}{m}\log|\mathbb{F}|+P_C\frac{SD}{n}\log|\mathbb{F}|-P_C\frac{TS}{m}\log|\mathbb{F}|\nonumber\\
&-P_C\frac{SD}{n}\log|\mathbb{F}|\nonumber\\
=&0,
\end{align}
where $(a)$ follows from the definition of encoding functions, since $\mathbf{A}_{\mathcal{P}}$ is a deterministic function of $\mathbf{A}$ and $\mathbf{R}_p$ and $\mathbf{B}_{\mathcal{P}}$ is a deterministic function of $\mathbf{B}$ and $\mathbf{R}'_p$, respectively, for all $p=1,\ldots,P_C$; 
$(b)$ follows from \eqref{eqsec1_pAM1li} and \eqref{eqsec1_pBM1li}, since from $P_R$ polynomial evaluations $\mathbf{A}_{\mathcal{P}}$ and $\mathbf{B}_{\mathcal{P}}$ in \eqref{eqsec1_pAM1li} and \eqref{eqsec1_pBM1li} we can recover $2P_C$ unknowns when the coefficients $\mathbf{A}_{i,j}$ and $\mathbf{B}_{k,l}$ are known, given that we have $P_R\geq 2P_C$; $(c)$ and $(d)$ follow since $\mathbf{R}_p$ and $\mathbf{R}'_p$ are independent uniformly distributed entries; 
$(e)$ follows by upper bounding the joint entropy using the sum of individual entropies; and $(f)$ follows from an argument similar to $(d)$. Hence, the proposed scheme is information-theoretically secure.

\section{Proof of Theorem \ref{SecurThm2}}\label{app1}
 
We define the $z$-transform of sequences $\mathbf{a}^*$ and $\mathbf{b}^*$ respectively as
	\begin{align}
	\mathbf{F}_{\mathbf{a}^*}(z)
	&=\sum_{i=1}^{t}\sum_{j=1}^{s} \mathbf{A}^*_{i,j} z^{s^*(i-1)+j-1}\nonumber\\
	&+\sum_{i=1}^{t}\sum_{j=s+1}^{s^*} \mathbf{A}^*_{i,j} z^{s^*(i-1)+j-1},\label{eqsec2_pAM1li}\\
		\mathbf{F}_{\mathbf{b}^*}(z)
		&=\sum_{k=1+\Delta_{P_C}'}^{s^*}\sum_{l=1}^{d}\mathbf{B}^*_{k,l}z^{s^*-k+ts^*(l-1)}\nonumber\\
		&+ \sum_{k=1}^{\Delta_{P_C}'}\sum_{l=1}^{d}\mathbf{B}^*_{k,l} z^{s^*-k+ts^*(l-1)} \label{eqsec2_pBM1li}.
 \end{align}
 The $(i,l)$ block $\mathbf{C}_{i,l}=\sum_{r=1}^{s}\mathbf{A}_{i,r}\mathbf{B}_{r,l}$, for all $i=1,\ldots,t$ and $l=1,\ldots,d$, of matrix $\mathbf{C}=\mathbf{AB}$ can be seen equal to the $(s^*i-1+(l-1)ts^*)$th sample of the convolution $\mathbf{c}^*=\mathbf{a}^**\mathbf{b}^*$.
 The rest of the proof follows in a manner akin to Theorem \ref{SecurThm}.

  \ifCLASSOPTIONcaptionsoff
 	\newpage
 	\fi


 	
 	%
 	
 	
 	 \balance
 	\bibliographystyle{IEEEtran} 
 	\bibliography{IEEEabrv,references} 

 \end{document}